\tikzstyle{program_node}=[circle,draw=blue!50,thick,minimum size=6mm]
\tikzstyle{large_program_node}=[circle,draw=blue!50,thick,minimum size=15mm]
\tikzstyle{transition}=[->,thick]
\tikzstyle{selected_transition}=[->,very thick,densely dashed, draw=selected]
\definecolor{listinggray}{gray}{0.9}
\definecolor{lbcolor}{rgb}{0.9,0.9,0.9}
\definecolor{selected}{rgb}{0.85,0.12,0.12}
\renewcommand{\vec}[1]{\uppercase{#1}}
\makeatletter\@ifclassloaded{llncs}{
\AtBeginEnvironment{definition}{}
\AtBeginEnvironment{example}{}
}{}\makeatother
\algrenewcommand\Return{\State \algorithmicreturn{} }
\algnewcommand{\LineComment}[1]{\vspace{0.4em}\State \(\triangleright\) #1}
\algnewcommand{\CLineComment}[1]{\State \(\triangleright\) #1}
\g@addto@macro\normalsize{%
\abovedisplayskip 5.0\p@ \@plus2\p@ \@minus4\p@
\abovedisplayshortskip \z@ \@plus2\p@
\belowdisplayshortskip 3\p@ \@plus2\p@ \@minus2\p@
\belowdisplayskip \abovedisplayskip
}
\g@addto@macro\small{%
\abovedisplayskip 7.5\p@ \@plus2\p@ \@minus4\p@
\abovedisplayshortskip \z@ \@plus2\p@
\belowdisplayshortskip 3\p@ \@plus2\p@ \@minus2\p@
\belowdisplayskip \abovedisplayskip
}
\renewcommand{\paragraph}{%
  \@startsection{paragraph}{4}%
  {\z@}{0.8ex \@plus 0ex \@minus 1ex}{-1em}%
  {\normalfont\normalsize\bfseries}%
}
\def\@IEEEsectpunct{.\ \,}
\newlength{\sectionspace}
\renewcommand\section{\@startsection{section}{1}{\z@}%
                       {-18\sectionspace \@plus -4\sectionspace \@minus -4\sectionspace}%
                       {12\sectionspace \@plus 4\sectionspace \@minus 4\sectionspace}%
                       {\normalfont\large\bfseries\boldmath
                        \rightskip=\z@ \@plus 8em\pretolerance=10000 }}
\newcommand{\model}{\ensuremath{\mathcal{M}}}
\newcommand{\ttbox}[1]{\mbox{\texttt{#1}}}
\newcommand{\keyword}[1]{\textsc{#1}}
\newcommand{\semantics}[1]{\ensuremath{\left\llbracket #1 \right\rrbracket}}
\newcommand{\svcomp}[0]{\textsc{SV-COMP}\xspace}
\newcommand{\rcnf}{\textsc{RCNF}\xspace}
\newcommand{\cfa}{\textsc{CFA}\xspace}
\newcommand{\art}{\textsc{ART}\xspace}
\newcommand{\setof}[1]{\ensuremath{\left \{{#1}\right\}}}
\newcommand{\sequence}[1]{\ensuremath{\left \langle{#1}\right\rangle}}
\newcommand{\tuple}[1]{\ensuremath{\left( #1 \right) }}
\newcommand{\definetool}[2]{\newcommand{#1}{{\small\sc #2}\xspace}}
\definetool{\javasmt}       {JavaSMT}
\definetool{\slicer}        {Slicer}
\definetool{\lpi}           {LPI}
\definetool{\mathsat}       {MathSAT}
\definetool{\optimathsat}   {OptiMathSAT}
\definetool{\zzz}           {Z3}
\definetool{\smtinterpol}   {SMTInterpol}
\definetool{\princess}      {Princess}
\definetool{\cpachecker}    {CPAchecker}
\definetool{\jconstraints}  {jConstraints}
\definetool{\pysmt}         {PySMT}
\definetool{\Houdini}       {Houdini}
\definetool{\pagai}       {PAGAI}
\LetLtxMacro{\oldmissingfigure}{\missingfigure}
\renewcommand{\missingfigure}[2][]{\tikzexternaldisable\oldmissingfigure[{#1}]{#2}\tikzexternalenable}
\LetLtxMacro{\oldtodo}{\todo}
\renewcommand{\todo}[2][]{\tikzexternaldisable\oldtodo[#1]{#2}\tikzexternalenable}
\begin{document}

\title{Formula Slicing: Inductive Invariants from Preconditions\thanks{The research leading to these results has received funding from the
\href{http://erc.europa.int/}{European Research Council} under the European
Union's Seventh Framework Programme (FP/2007-2013) / ERC Grant Agreement
nr.~306595 \href{http://stator.imag.fr/}{\mbox{``STATOR''}}.}}

\author{Egor George Karpenkov \and David Monniaux}
\institute{Univ. Grenoble Alpes, VERIMAG, F-38000 Grenoble, France\\
CNRS, VERIMAG, F-38000 Grenoble, France}

\date{\today}

\maketitle

\begin{abstract}
We propose a ``formula slicing'' method for finding inductive invariants.
It is based on the observation that many loops in the program
affect only a small part of the memory,
and many invariants which were valid before a loop are still valid after.

Given a precondition of the loop,
obtained from the preceding program fragment,
we weaken it until it becomes inductive.
The weakening procedure is guided by counterexamples-to-induction given by an
SMT solver.
Our algorithm applies to programs with arbitrary loop structure,
and it computes the strongest invariant in an abstract domain of weakenings of
preconditions.
We call this algorithm ``formula slicing'', as it effectively performs
``slicing'' on formulas derived from symbolic execution.

We evaluate our algorithm on the device driver benchmarks
from the International Competition on Software Verification (SV-COMP),
and we show that it is competitive with the state-of-the-art verification
techniques.
\end{abstract}

\section{Introduction}
\label{sec:introduction}

In automated program verification, one crucial task is establishing
\emph{inductive invariants} for loops: properties that hold initially,
and also by induction for any number of execution steps.

Abstract-interpretation-based approaches restrict the class of expressible
invariants to a predefined \emph{abstract domain}, such as
intervals, octagons, or convex polyhedra (all of which can only express convex properties).
Any \emph{candidate invariants} which can not be expressed in the chosen abstract
domain get over-approximated.
Traditionally, this restriction applies at all program locations,
but approaches such as \emph{path focusing}~\cite{path_focusing} limit the
precision loss only to loop heads,
representing program executions between the loop-heads \emph{precisely}
using first-order formulas.

This is still a severe restriction:
if a property flows from the beginning of the program to a loop head,
and holds inductively after,
but is not representable within the chosen abstract domain, it is discarded.
In contrast, our idea exploits the insight that many loops in the program
affect only a small part of the memory,
and many invariants which were valid before the loop are still valid.

Consider finding an inductive invariant for the motivating example in Fig.~\ref{fig:slicing_example}.
Symbolic execution up to the loop-head can precisely express
all reachable states:
\begin{equation}
i = 0 \land (p \neq 0 \implies x \geq 0) \land (p = 0 \implies x < 0)
\label{eq:reachable_before}
\end{equation}
Yet abstraction in a numeric convex domain at the loop head
yields $i = 0$,
completely losing the information that $x$ is positive iff $p \neq 0$.
Observe that this information loss is not \emph{necessary},
as the sign of $x$ stays invariant under the multiplication by a positive constant
(assuming mathematical integers for the simplicity of exposition).
To avoid this loss of precision, we develop a ``formula slicing'' algorithm which
computes inductive \emph{weakenings} of propagated formulas,
allowing to propagate the formulas representing inductive invariants
\emph{across} loop heads.
In the motivating example, formula slicing computes an inductive weakening of
the initial condition in Eq.~\ref{eq:reachable_before}), which is
$(p \neq 0 \implies x \geq 0) \land (p = 0 \implies x < 0)$,
and is thus true at every iteration of the loop.
The computation of inductive weakenings is performed by iteratively filtering
out conjuncts falsified by \emph{counterexamples-to-induction},
derived using an SMT solver.
In the motivating example, transition $i=1$ from $i=0$ falsifies the constraint $i=0$,
and the rest of the conjuncts are inductive.

\begin{figure}[t]
        \begin{minted}{c}
int x = input(), p = input();
if (p)
    assume(x >= 0);
else
    assume(x < 0);
for (int i=0; i < input(); i++) x *= 2;
        \end{minted}
    \caption{Motivating Example for Finding Inductive Weakenings.}
    \label{fig:slicing_example}
\end{figure}

The formula slicing fixpoint computation algorithm
is based on performing abstract interpretation on the lattice
of conjunctions over a finite set of predicates.
The computation starts with a seed invariant which \emph{necessarily} holds
at the given location
on the first time the control reaches it,
and during the computation it is iteratively weakened until inductiveness.
The algorithm terminates within a polynomial number of SMT calls with the
\emph{smallest} invariant which can be expressed in the chosen lattice.

\paragraph*{Contributions}
We present a novel insight for generating inductive
invariants, and a method for creating a lattice of weakenings from an arbitrary
formula describing the loop precondition using a \emph{relaxed conjunctive
    normal form} (Def.~\ref{def:rcnf}) and best-effort quantifier elimination
(Sec.~\ref{sec:existential_quantification}).

We evaluate (Sec.~\ref{sec:evaluation}) our implementation of the formula
slicing algorithm on the ``Device Drivers'' benchmarks from the
International Competition on Software Verification~\cite{svcomp16},
and we demonstrate that it can successfully verify large,
real-world programs which can not be handled with traditional numeric abstract
interpretation,
and that it is competitive with state of the art techniques.

\paragraph*{Related Work}
\label{sec:related_work}
The \emph{Houdini}~\cite{houdini} algorithm mines the program for a set of predicates,
and then finds the largest inductive subset,
dropping the candidate non-inductive
lemmas until the overall inductiveness is achieved.
The optimality proof for \emph{Houdini} is present in the companion
paper~\cite{houdini_proof}.
A very similar algorithm is used by Bradley et Al.~\cite{proving_inductiveness}
to generate the inductive invariants from negations of the counter-examples to induction.

Inductive weakening based on counterexamples-to-induction can be seen as an
algorithm for performing predicate abstraction~\cite{predicate_abstraction}.
Generalizing inductive weakening to \emph{best abstract postcondition computation}
Reps et al.~\cite{best_transformer} use the weakening approach for computing 
the best abstract transformer for any finite-height domain, which we also
perform in Sec.~\ref{sec:inductive_postcondition}.

Generating inductive invariants from a number of heuristically generated lemmas
is a recurrent theme in the verification field.
In \emph{automatic abstraction}~\cite{spacer} a set of predicates is found for the simplified
program with a capped number of loop iterations, and is filtered
until the remaining invariants are inductive for the original, unmodified
program.
A similar approach is used for synthesizing bit-precise invariants by
Gurfinkel et Al.~\cite{frankenbit}.

The complexity of the inductive weakening and that of the related template
abstraction problem are analyzed by Lahiri and Qadeer~\cite{Lahiri_Qadeer_2009}.

\paragraph*{Overview}
We introduce the necessary background in Sec.~\ref{sec:background} and the
weakening algorithm in Sec.~\ref{sec:cti_weakening}.
We define the space of all used weakenings in Sec.~\ref{sec:weakening_space}.
We develop the formula slicing algorithm for applying inductive
weakening to real programs in Sec.~\ref{sec:cpa},
we describe our implementation and the required optimizations and improvements
in Sec.~\ref{sec:implementation},
and we conclude with the empirical evaluation on the \svcomp dataset in
Sec.~\ref{sec:evaluation}.

\section{Background}
\label{sec:background}

\subsection{Logic Preliminaries}
\label{sec:background_logic}

We operate over first-order, existentially quantified
logic formulas within an efficiently decidable theory.
A set of all such formulas over free variables in $X$ is denoted by
$\mathcal{F}(X)$.
Checking such formulas for satisfiability is NP-hard,
but with modern SMT (\emph{satisfiability modulo theories}) solvers
these checks can often be performed very fast.

A formula is said to be an \emph{atom} if it does not contain logical connectives
(e.g. it is a comparison $x \leq y$ between integer variables),
a \emph{literal} if it is an atom or its negation,
 and a \emph{clause} if it is a disjunction of literals.
A formula is in \emph{negation normal form} (NNF) if negations are
applied only to atoms,
and it is in \emph{conjunctive normal form} (CNF) if it is a conjunction of clauses.
For a set of variables $X$, we denote by $X'$ a set where the prime symbol
was added to all the elements of $X$.
With $\phi[a_1/a_2]$ we denote the formula $\phi$ after all free occurrences of the
variable $a_1$ have been replaced by $a_2$.
This notation is extended to sets of variables: $\phi[X/X']$ denotes the formula
$\phi$ after all occurrences of the free variables from $X$ were replaced with
corresponding free variables from $X'$.
For brevity, a formula $\phi[X/X']$ may be denoted by $\phi'$.
We use the brackets notation to indicate what free variables can occur in a
formula: e.g. $\phi(X)$ can only contain free variables in $X$.
The brackets can be dropped if the context is obvious.

A formula $\phi(X)$, representing a set of program states, is said to be
\emph{inductive} with respect to a formula $\tau(X \cup X')$, representing a
\emph{transition}, if Eq.~\ref{eq:inductiveness} is valid:

\begin{equation}\label{eq:inductiveness}
  \phi(X) \land \tau(X \cup X') \implies \phi'(X')
\end{equation}

That is, all transitions originating in $\phi$ end up in $\phi'$.
We can query an SMT solver for the inductiveness of $\phi(X)$ with
respect to $\tau(X \cup X')$ using the constraint in Eq.~\ref{eq:inductiveness_checking},
which is unsatisfiable iff $\phi(X)$ is inductive.
    \begin{equation}\label{eq:inductiveness_checking}
        \phi(X) \land \tau(X \cup X') \land \lnot \phi'(X')
    \end{equation}

For a quantifier-free formula $\phi$ inductiveness checking is co-NP-complete.
However, if $\phi$ is existentially quantified,
the problem becomes $\Pi^p_2$-complete.
For efficiency, we shall thus restrict inductiveness checks to quantifier-free
formulas.

\subsection{Program Semantics and Verification Task}
\label{sec:background_semantics}

\begin{definition}[\textsc{CFA}]
    \label{def:cfa}
    A control flow automaton is a tuple $(\mathit{nodes}, \mathit{edges}, n_0, X)$,
    where $\mathit{nodes}$ is a set of program control states, modelling the
    program counter, $n_0 \in \mathit{nodes}$ is a program starting point,
    and $X$ is a set of program variables.
    Each edge $e \in \mathit{edges}$ is a tuple $\tuple{a, \tau(X \cup X'), b}$,
    modelling a possible transition,
    where $\setof{a, b} \subseteq \mathit{nodes}$,
    and $\tau(X \cup X')$ is a formula defining the semantics of a transition over
    the sets of input variables $X$ and output variables $X'$.
\end{definition}

A non-recursive program in a C-like programming language can be trivially converted to a
\cfa by inlining functions,
replacing loops and conditionals with guarded \texttt{goto}s,
and converting guards and assignments to constraints over
input variables $X$ and output variables $X'$.

A \emph{concrete data state} $m$ of a \cfa is a variable assignment
$X \to \mathbb{Z}$
which assigns each variable an integral value.\footnote{The restriction to
    integers is for the simplicity of
    exposition, and is not present in the implementation.}
The set of all concrete data states is denoted by $\mathcal{C}$.
A set $r \subseteq \mathcal{C}$ is called a
\emph{region}.
A formula $\phi(X)$ defines a region $S$ of all states which it models
($S \equiv \{  c \mid c \models \phi \}$).
A set of all formulas over $X$ is denoted by $\mathcal{F}(X)$.
A \emph{concrete state} $c$ is a tuple $\tuple{m, n}$ where $m$ is a concrete
data state, and $n \in \mathit{nodes}$ is a control state.
A \emph{program path} is a sequence of concrete states $\sequence{c_0, \ldots, c_n}$
such that for any two consecutive states
$c_i = \tuple{m_i, n_i}$ and $c_{i+1} = \tuple{m_{i+1}, n_{i+1}}$ there exists
an edge $\tuple{n_{i}, \tau, n_{i+1}}$ such that
$m_i(X) \cup m_{i+1}(X') \models \tau(X \cup X')$.
A concrete state $s_i = \tuple{m, n}$, and the contained node $n$,
are both called \emph{reachable} iff there exists a program path which
contains~$s_i$.

A \emph{verification task} is a pair $\tuple{P, n_e}$ where $P$ is a
\textsc{CFA} and $n_e \in \mathit{nodes}$ is an \emph{error node}.
A verification task is \emph{safe} if $n_e$ is not reachable.
Safety is traditionally decided by finding a \emph{separating} inductive
invariant: a mapping from program locations to regions which is closed under the
transition relation and does not contain the error state.

\subsection{Invariant and Inductive Invariant}
\label{sec:inductive_invariant}

A set of concrete states is called a \emph{state-space},
and is defined using a mapping from nodes to regions.
A mapping $I: \mathit{nodes} \to \mathcal{F}(X)$ is an \emph{invariant} if it
contains \emph{all} reachable states,
and an \emph{inductive invariant} if it is closed under the transition relation:
that is, it satisfies the conditions for \emph{initiation} and
\emph{consecution}:

\begin{equation}\label{eq:cfa_inductiveness}
\begin{split}
    \text{Initiation: } & I(n_0) = \top \\
    \text{Consecution: }& \text{for all edges $\tuple{a, \tau, b} \in
        \mathit{edges}$, for all $X, X'$} \\
        & I(a)(X) \land \tau(X \cup X') \implies (I(b))'(X')
\end{split}
\end{equation}

Intuitively, the initiation condition dictates that the initial program state at $n_0$
(arbitrary contents of memory) is covered by $I$,
and the consecution condition dictates that under all transitions $I$ should map
into itself.
Similarly to Eq.~\ref{eq:inductiveness_checking}, the consecution condition in
Eq.~\ref{eq:cfa_inductiveness} can be verified by checking one constraint for
unsatisfiability using SMT for each edge in a \textsc{CFA}.
This constraint is given in Eq.~\ref{eq:cfa_consecution_check},
which is unsatisfiable for each edge $\tuple{a, \tau, b} \in \mathit{edges}$
iff the consecution condition holds for $I$.

\begin{equation}\label{eq:cfa_consecution_check}
    I(a)(X) \land \tau(X \cup X') \land \lnot (I(b))'(X')
\end{equation}

\subsection{Abstract Interpretation Over Formulas}
\label{sec:background_abstract_interpretation}
Program analysis by abstract interpretation~\cite{abstract_interpretation} searches
for inductive invariants in a given \emph{abstract domain}: the class of
properties considered by the analysis
(e.g. upper and lower bounds on each numeric variable).
The run of abstract interpretation effectively \emph{interprets} the program in
the given \emph{abstract domain}, performing operations on the elements of
an abstract domain instead of concrete values
(e.g. the interval $x \in [1, 2]$ under the transition
\ttbox{x += 1} becomes $x \in [2, 3]$).

We define the abstract domain $\mathcal{D} \equiv 2^{\mathcal{L}} \cup \{ \bot \}$
to be a powerset of the set of formulas $\mathcal{L} \subseteq \mathcal{F}(X)$
with an extra element $\bot$ attached.
A \emph{concretization} of an element $d \in \mathcal{D}$ is a 
conjunction over all elements of $d$, or a formula $\mathit{false}$ for $\bot$.

Observe that $\mathcal{D}$ forms a complete lattice
by using set operations of intersection and union as meet and join operators
respectively, and using \emph{syntactical} equality for comparing individual
formulas.
The syntactic comparison is an over-approximation
as it does not take the formula semantics into account.
However, this comparison generates a complete lattice of height $\|\mathcal{L}\| + 2$.

\subsection{Large Block Encoding}
\label{sec:large_block_encoding}

The approach of large block encoding~\cite{large_block_encoding} for model checking,
and the approach of path focusing~\cite{path_focusing} for abstract interpretation
are based on the observation that by \emph{compacting} a control flow and reducing a
number of abstraction points,
analysis precision and sometimes even analysis performance can be greatly
improved.
Both approaches utilize SMT solvers for performing abstraction afterwards.

A simplified version of compaction is possible by applying the
following two rules to a \textsc{CFA} until a fixed point is reached:
\begin{itemize}
    \item Two consecutive edges $\tuple{a, s_1, b}$ and $\tuple{b, s_2, c}$ with
        no other existing edge entering or leaving $b$ get replaced by a new edge
        $(a, \exists \hat{X} \ldotp s_1[X'/\hat{X}] \land s_2[X/\hat{X}], c)$.
    \item Two parallel edges $\tuple{a, s_1, b}$ and $\tuple{a, s_2, b}$ get
        replaced by $\tuple{a, s_1 \lor s_2, c}$.
\end{itemize}
In our approach, this pre-processing is used on the \textsc{CFA} obtained from the
analyzed program.

\section{Counterexample-to-Induction Weakening Algorithm}
\label{sec:cti_weakening}

The approaches~\cite{houdini,proving_inductiveness,spacer,frankenbit}
mentioned in Sec.~\ref{sec:related_work} are all based on using counterexamples
to induction for filtering the input set of candidate lemmas.
For completeness, we restate this approach in
Alg.~\ref{alg:counterexample_driven}.

In order to perform the weakening without syntactically modifying $\phi$ during
the intermediate queries,
we perform \emph{selector variables} annotation:
we replace each lemma $l_i \in \phi$ with a disjunction $s_i \lor l_i$,
using a fresh boolean variable $s_i$.
Observe that if all selector variables are assumed to be false the annotated
formula $\phi_{\text{annotated}}$ is equivalent to $\phi$,
and that assuming any individual selector $s_i$ is equivalent to removing
(replacing with $\top$) the corresponding lemma $l_i$ from $\phi$.
Such an annotation allows us to make use of \emph{incrementality}
support by SMT solvers, by using the \emph{solving with assumptions} feature.

Alg.~\ref{alg:counterexample_driven} iteratively checks input formula $\phi$
for inductiveness using Eq.~\ref{eq:inductiveness_checking}
(line~\ref{alg:counterexample_driven:termination}).
The solver will either report that the constraint is unsatisfiable,
in which case $\phi$ is inductive,
or provide a counterexample-to-induction represented by a model $\model(X \cup
X')$ (line~\ref{alg:counterexample_driven:model}).
The counterexample-driven algorithm uses $\model$ to find the
set of lemmas which should be removed from $\phi$,
by removing the lemmas modelled by $\model$ in $\lnot \phi'$
(line~\ref{alg:counterexample_driven:drop}).
The visualization of such a filtering step for a formula $\phi$ consisting of
two lemmas is given in Fig.~\ref{fig:cex_weakening_filtering}.

\begin{figure}[t]
    \centering
    \tikzpicturedependsonfile{diagrams/non_inductive_region.tex}
    \begin{tikzpicture}[decoration={}]

\begin{scope}
    \draw[decorate, rotate=45] (0, 0) ellipse (60pt and 35pt) node[anchor=north west, below left=1.2]{$L_1$};
    
    \draw[decorate, rotate=45] (0, 0) ellipse (35pt and 60pt) node[anchor=north, below right=1.2]{$L_2$}
      node[anchor=north, below=1.8]{$ L_1 \land L_2$};
    
    \begin{scope}
    	\clip[decorate, rotate=45] (0, 0) ellipse (60pt and 35pt);
    	\fill[pattern color=gray, decorate, pattern=north west lines, rotate=45] (0, 0) ellipse (35pt and 60pt);
    \end{scope}
    
    \draw[fill](-0.3, 0) circle (1.4pt) node[anchor=north] {$\mathcal{M}(X)$};
   
    \draw[] (-2, -2.6) rectangle (2.2, 2.1) node[anchor=north east] {$X$};

\end{scope}

\begin{scope}[xshift=4.6cm]

    \draw[pattern color=gray!80, pattern=north west lines] (-2, -2.6) rectangle (2.2, 2.1) node[anchor=north east] {$X'$};

    \draw[decorate, rotate=45] (0, 0) ellipse (60pt and 35pt) node[anchor=north west, below left=1.2]{$L_1$};
    
    \draw[decorate, rotate=45] (0, 0) ellipse (35pt and 60pt) node[anchor=north, below right=1.2]{$L_2$}
      node[anchor=north, below=1.8]{$\lnot L_1' \lor \lnot L_2'$};

    \begin{scope}
    	\clip[decorate, rotate=45] (0, 0) ellipse (60pt and 35pt);
    	\fill[fill=white, rotate=45] (0, 0) ellipse (35pt and 60pt);
    \end{scope}

    \draw[fill](1.1, 1.15) circle (1.4pt) node[anchor=north] {$\mathcal{M}(X')$};

\end{scope}

    \draw[->,thick] (-0.3, 0) .. controls (2.2,3.1) and (2.2,3.1) .. node[anchor=south, above]{$\tau(X \cup X')$} (5.7, 1.2);

\end{tikzpicture}
    \caption{Formula $\phi(X) \equiv L_1(X) \land L_2(X)$
        is tested for inductiveness under $\tau(X \cup X')$.
        Model $\model$ identifies a counter-example to induction.
        From $\model \models \lnot L_2'(X')$ we know that the
        lemma $L_2$ has to be dropped.
        As weakening progresses, the shaded region in the left box is growing, while
        the shaded region in the right box is shrinking, until there are no more
        counterexamples to induction.}
    \label{fig:cex_weakening_filtering}
\end{figure}

\begin{algorithm}[t]
\begin{algorithmic}[1]
\State \textbf{Input: } Formula $\phi(X)$ to weaken in \textsc{RCNF},
                        transition relation $\tau(X \cup X')$
\State \textbf{Output: } Inductive $\hat{\phi} \subseteq \phi$
\LineComment{Annotate lemmas with selectors,
    $S$ is a mapping from selectors to lemmas they annotate}.
\State $S, \phi_{\text{annotated}} \gets$ \Call{Annotate}{$\phi$}
    \label{alg:counterexample_driven:annotation}
\State $T \gets$ SMT solver instance
\State $\mathit{query} \gets \phi_{\text{annotated}} \land \tau \land \lnot
\phi_{\text{annotated}}'$
    \label{alg:counterexample_driven:query}
\State Add $\mathit{query}$ to constraints in $T$
\State $\mathit{assumptions} \gets \emptyset$
\State $\mathit{removed} \gets \emptyset$

\LineComment{In the beginning, all of the lemmas are present}
\ForAll{$\tuple{\mathit{selector}, \mathit{lemma}} \in S$}
    \State $\mathit{assumptions} \gets \mathit{assumptions} \cup
            \setof{\lnot \mathit{selector}}$
\EndFor

\While{$T$ is satisfiable with $\mathit{assumptions}$}
        \label{alg:counterexample_driven:termination}
    \State $\model \gets$ model of $T$
        \label{alg:counterexample_driven:model}

    \State $\mathit{assumptions} \gets \emptyset$
    \ForAll{$\tuple{\mathit{selector}, \mathit{lemma}} \in S$}
            \label{alg:counterexample_driven:iteration}
        \If{$\model \models \lnot \mathit{lemma}'$ or
                $\mathit{lemma}'$ is \emph{irrelevant} to satisfiability}
                    \label{alg:counterexample_driven:chosen}
            \LineComment{$\mathit{lemma}$ has to be removed.}
            \State $\mathit{assumptions} \gets \mathit{assumptions} \cup
                \{\mathit{selector}\}$
            \State $\mathit{removed} \gets \mathit{removed} \cup
                \{\mathit{lemma}\}$ \label{alg:counterexample_driven:drop}
        \Else
            \State $\mathit{assumptions} \gets \mathit{assumptions} \cup
                \setof{\lnot \mathit{selector}}$
        \EndIf
    \EndFor
\EndWhile
\LineComment{Remove all lemmas which were filtered out}
\Return $\phi[\mathit{removed}/\top]$
\end{algorithmic}
\caption{Counterexample-Driven Weakening.}
\label{alg:counterexample_driven}
\end{algorithm}

As shown in related literature~\cite{houdini_proof},
Alg.~\ref{alg:counterexample_driven} terminates with the \emph{strongest}
possible weakening within the linear number of SMT calls with respect to 
$\|\phi_{\text{annotated}}\|$.

\subsection{From Weakenings to Abstract Postconditions}
\label{sec:inductive_postcondition}

As shown by Reps et Al.~\cite{best_transformer}, the inductive weakening
algorithm can be generalized for the abstract postcondition computation for any
finite-height lattice.

For given formulas $\psi(X)$, $\tau(X \cup X')$, and $\phi(X)$ consider the problem of
finding a weakening $\hat{\phi} \subseteq \phi$, such that all feasible transitions from $\psi$
through $\tau$ end up in $\hat{\phi}$.
This is an abstract postcondition of $\psi$ under $\tau$ in the lattice of all weakenings
of $\phi$ (Sec.~\ref{sec:background_abstract_interpretation}).
The problem of finding it is very similar to the problem of
finding an inductive weakening,
as similarly to Eq.~\ref{eq:inductiveness_checking},
we can check whether a given weakening of $\phi$
is a postcondition of $\psi$ under $\tau$ using Eq.~\ref{eq:inductive_postcondition},

\begin{equation}\label{eq:inductive_postcondition}
    \psi(X) \land \tau(X \cup X') \land \lnot \phi_\text{annotated}'(X')
\end{equation}

Alg.~\ref{alg:counterexample_driven} can be adapted for finding the
\emph{strongest} postcondition in the abstract domain of weakenings of the input
formula with very minor modifications.
The required changes are accepting an extra parameter $\psi$,
and changing the queried constraint (line~\ref{alg:counterexample_driven:query})
to Eq.~\ref{eq:inductive_postcondition}.
The found postcondition is indeed strongest~\cite{best_transformer}.

\section{The Space of All Possible Weakenings}
\label{sec:weakening_space}

We wish to find a \emph{weakening} of a set of states represented by $\phi(X)$,
such that it is inductive under a given transition $\tau(X \cup X')$.
For a single-node \textsc{CFA} defined by initial condition $\phi$
and a loop transition $\tau$ such a weakening would constitute an
\emph{inductive invariant} as by definition of weakening it satisfies the
initial condition and is inductive.

We start with an observation that for a formula in \keyword{NNF} replacing any
subset of literals with $\top$ results in an over-approximation,
as both conjunction and disjunction are monotone operators.
E.g. for a formula $\phi \equiv (l_a \land l_b) \lor l_c$ such possible
weakenings are $\top$, $l_b \lor l_c$, and $l_a \lor l_c$.

The set of weakenings defined in the previous paragraph is redundant, as it does
not take the formula structure into account --- e.g. in the given example
if $l_c$ is replaced with $\top$ it is irrelevant what other literals are
replaced, as the entire formula simplifies to $\top$.
The most obvious way to address this redundancy is to convert $\phi$ to
\keyword{CNF} and to define the set of all possible weakenings as conjunctions
over the subsets of clauses in $\phi_\text{CNF}$.
E.g. for the formula $\phi \equiv l_a \land l_b \land l_c$ possible weakenings
are $l_a \land l_b$, $l_b \land l_c$, and $l_a \land l_c$.
This method is appealing due to the fact that for a set of lemmas the
\emph{strongest} (implying all other possible inductive weakenings)
inductive subset can be found using a linear number of SMT
checks~\cite{proving_inductiveness}.
However (Sec.~\ref{sec:background_logic})
polynomial-sized \textsc{CNF} conversion (e.g. Tseitin encoding) requires
introducing existentially quantified boolean variables which
make inductiveness checking $\Pi^p_2$-hard.

The arising complexity of finding inductive
weakenings is inherent to the problem: in fact, the problem of finding
\emph{any} non-trivial ($\neq \top$) weakening within the search space described
above is $\Sigma^p_2$-hard (see proof in Appendix~\ref{sec:complexity_proof}).

Thus instead we use an over-approximating set of weakenings,
defined by all possible subsets of lemmas present in $\phi$ after the
conversion to \emph{relaxed conjunctive normal form}.

\begin{definition}[Relaxed Conjunctive Normal Form (RCNF)]
    A formula $\phi(X)$ is in \emph{relaxed conjunctive normal form} if it is a
    conjunction of quantifier-free formulas (lemmas).
    \label{def:rcnf}
\end{definition}

For example, the formula $\phi \equiv l_a \land (l_b \lor (l_c \land l_d))$ is in
\keyword{RCNF}.
The over-approximation comes from the fact that non-atomic parts of the formula
are grouped together: the only possible non-trivial weakenings for
$\phi$ are $l_a$ and $l_b \lor (l_c \land l_d)$, and it is
impossible to express $l_a \land (l_b \lor l_c)$ within the search space.

We may abuse the notation by treating $\phi$ in \keyword{RCNF} as a
set of its conjuncts,
and writing $l \in \phi$ for a lemma $l$ which is an argument of the parent
conjunction of $\phi$, or $\phi_1 \subseteq \phi_2$ to indicate that all
lemmas in $\phi_1$ are contained in $\phi_2$, or $\|\phi\|$ for the number of
lemmas in $\phi$.
For $\phi$ in \textsc{RCNF} we define a set of all possible \emph{weakenings} as
conjunctions over all sets of lemmas contained in $\phi$.
We use an existing, optimal counter-example based algorithm in order to find
the \emph{strongest} weakening of $\phi$ with respect to $\tau$ in
the next section.

A trivially correct conversion to a relaxed conjunctive normal is to convert an input
formula $\phi$ to a conjunction $\bigwedge \setof{\phi}$.
However, this conversion is not very interesting, as it gives rise to a very
small set of weakenings: $\phi$ and $\top$.
Consequently, with such a conversion,
if $\phi$ is not inductive with respect to the transition of interest,
no non-trivial weakening can be found.
On the other extreme, $\phi$ can be converted to \keyword{CNF} explicitly
using associativity and distributivity laws, giving rise to a very large set of
possible weakenings.
Yet the output of such a conversion is exponentially large.

We present an algorithm which converts $\phi$ into a
polynomially-sized conjunction of lemmas.
The following rules are applied recursively until a
fixpoint is reached:
\begin{description}
    \item[Flattening] All nested conjunctions are flattened.
        E.g. $a \land (b \land c) \mapsto a \land b \land c$.
    \item[Factorization]
        When processing a disjunction over multiple conjunctions we find and
        extract a common factor.
        E.g. $(a \land b) \lor (b \land c) \mapsto b \land (a \lor c)$.
    \item[Explicit expansion with size limit] A disjunction $\bigvee L$,
        where each $l \in L$ is a conjunction,
        can be converted to a conjunction over disjunctions over all elements in
        the cross product over $L$.
        E.g. $(a \land b) \lor (c \land d)$ can be converted
        $(a \lor c) \land (a \lor d) \land (b \lor c) \land (b \lor d)$.

        Applying such an expansion results in an exponential
        blow-up, but we only perform it if the resulting formula size is smaller
        than a fixed constant, and we limit the expansion depth to one.
\end{description}

\paragraph*{Eliminating Existentially Quantified Variables}
\label{sec:existential_quantification}
The formulas resulting form large block encoding
(Sec.~\ref{sec:large_block_encoding})
may have intermediate (neither input nor output), existentially bound variables.
In general, existential quantifier elimination (with e.g. Fourier-Motzkin) is
exponential.
However, for many cases such as simple deterministic assignments, existential
quantifier elimination is easy: e.g. $\exists t \ldotp x' = t + 3 \land t = x + 2$ can be
trivially replaced by $x' = x + 5$ using substitution.

We use a two-step method to remove the quantified variables:
we run a best-effort pattern-matching approach,
removing the bound variables which can be eliminated in polynomial time,
and in the second step we drop all the lemmas
which still contain the existentially bound variables.
The resulting formula is an over-approximation of the original one.

\section{Formula Slicing: Overall Algorithm}
\label{sec:cpa}

We develop the \emph{formula slicing} algorithm in order to apply
the inductive weakening approach for generating inductive
invariants in large, potentially non-reducible programs with nested loops.

``Classical'' Houdini-based algorithms consist of two steps: \emph{candidate} lemmas
generation, followed by counterexample-to-induction-based filtering.
However, in our case candidate lemmas representing
postconditions depend on previous filtering steps,
and careful consideration is required in order to generate \emph{unique} candidate
lemmas which do not depend on the chosen iteration order.

\paragraph*{Abstract Reachability Tree}
In order to solve this problem we use abstract reachability tree~\cite{blast} (\art) as a
main datastructure for our algorithm.
For the simplicity of notation we introduce the projection function $\pi_i$,
which projects the $i^\text{th}$ element of the tuple.
An \art describes the current invariant candidate processed by the analysis
for a fixed \cfa $\tuple{\mathit{nodes}, \mathit{edges}, n_0, X}$, and is
defined by a set of nodes $T$.
Each node $t \in T$ is a triple, consisting of a \cfa node
$n \in \mathit{nodes}$, defining which location $t$
corresponds to, an abstract domain element $d \in \mathcal{D}$,
defining the reachable state space at $t$,
and an optional backpointer $b \in (T \cup \{\emptyset\})$,
defining the tree structure.
The tree topology has to be consistent with the structure of the underlying
\cfa: node $a \in T$ can have a backpointer to the node $b \in T$
only if there exists an edge $(\pi_1(a), \_, \pi_1(b))$ in the \cfa.
The starting tree node $t_0$ is $(n_0, \top, \emptyset)$.

An \art is \emph{sound} if the output of each transition
over-approximates the strongest postcondition:
that is, for each node $t \in T$ with non-empty backpointer $b = \pi_3(t)$,
an edge $e = (\pi_1(b), \tau, \pi_1(t))$ must exist in $\mathit{edges}$, and the abstract domain
element associated with $t$ must over-approximate the strongest post-condition
of $b$ under $\tau$.
Formally, the following must hold:
$\exists X \ldotp \semantics{\pi_2(b)} \land \tau \implies \semantics{\pi_2(t)}'$
(recall that priming is a renaming operation $[X/X']$).
A node $b \in T$ is \emph{fully expanded} if for all edges
$(\pi_1(t), \tau, n) \subseteq \mathit{edges}$ there exists a node $t \in T$,
where $\pi_1(t) = n$, and $\pi_2(t)$ over-approximates the strongest
post-condition of $\pi_2(b)$ under $\tau$.
A node $(a, d_1, \_)$ \emph{covers} another node $(a, d_2, \_)$ iff
$\semantics{d_2} \implies \semantics{d_1}$.
A sound labelled \art where all nodes are either fully expanded or
covered represents an inductive invariant.

The transfer relation for the formula slicing is given in
Alg.~\ref{alg:formula_slicing}.
In order to generate a successor for an element $(n_a, d, b)$,
and an edge $(n_a, \tau, n_b)$ we
first traverse the chain of backpointers up the tree.
If we can find a ``sibling'' element $s$ where $\pi_1(s) = n_a$\footnote{
    In the implementation, the \emph{sibling} is defined by a
    combination of callstack, \cfa node and loopstack.} by following the
backpointers, we weaken $s$ until
inductiveness (line~\ref{alg:formula_slicing:weaken}) relative to the new
incoming transition $\tau$, and return that as a postcondition.
Such an operation effectively performs widening~\cite{abstract_interpretation}
to enforce convergence.
Alternatively, if no such sibling exists, we convert $\exists X \ldotp \land \tau$
to \rcnf form (line~\ref{alg:formula_slicing:to_rcnf}),
and this becomes a new element of the abstract domain.

The main fixpoint loop performs the following calculation:
for every leaf in the tree which is not yet expanded or covered,
all successors are found
using the transfer relation defined in Alg.~\ref{alg:formula_slicing},
and for each newly created element, coverage relation is checked against all elements in
the same partition.
A simplified version of this standard fixpoint iteration on \art is given in
Alg.~\ref{alg:formula_slicing:overall}.

\begin{algorithm}[t]
\begin{algorithmic}[1]
    \State \textbf{Input: } \cfa $(\mathit{nodes}, \mathit{edges}, n_0, X)$
    \LineComment{Expanded.}
    \State $E \gets \emptyset$
    \LineComment{Covered.}
    \State $C \gets \emptyset$
    \State $t_0 \gets (n_0, \top, \emptyset)$
    \State $T \gets \{t_0\}$
    \While{$\exists t \in (T \setminus E \setminus C)$}
        \LineComment{Expand.}
        \ForAll{edge $e \in \mathit{edges}$ where $\pi_1(e) = \pi_1(t)$}
            \State $T \gets T \cup \{$ \Call{TransferRelation}{$e, t$} $\}$
        \EndFor
        \State $E \gets E \cup \{ t \}$
        \LineComment{Check Coverage.}
        \ForAll{$t_1 \in (T \setminus C)$ where $\pi_1(t_1) = \pi_1(t)$}
            \If{$\semantics{\pi_2(t_1)} \implies \semantics{\pi_2(t)}$}
                \State $C \gets C \cup \{ t_1 \}$
            \EndIf
            \If{$\semantics{\pi_2(t)} \implies \semantics{\pi_2(t_1)}$}
                \State $C \gets C \cup \{ t \}$
            \EndIf
        \EndFor
    \EndWhile
\end{algorithmic}
\caption{Formula Slicing: Overall Algorithm}
\label{alg:formula_slicing:overall}
\end{algorithm}

Observe that our algorithm has a number of positive features.
Firstly, because our main datastructure is an \art,
in case of a counterexample we get a \emph{path} to a property violation
(though due to abstraction used, not all taken transitions are necessarily
feasible, similarly to the \emph{leaping counterexamples} of LoopFrog~\cite{loopfrog}).
Secondly, our approach for generating initial candidate invariants ensures
uniqueness, even in the case of a non-reducible \cfa.

As a downside, tree representation may lead to the exponential state-space
explosion (as a single node in a \cfa may correspond to many nodes in an \art).
However, from our experience in the evaluation (Sec.~\ref{sec:evaluation}),
with a good iteration order (stabilizing inner components first~\cite{wto})
this problem does not occur in practice.

\begin{algorithm}[t]
\begin{algorithmic}[1]
    \Function{TransferRelation}{edge $e \equiv (n_a, \tau, n_b)$,
                                state $t \equiv (n_a, d, b)$}
        \State sibling $s \gets$ \Call{FindSibling}{$b, n_0$}
        \If{$s \neq \emptyset$}
            \LineComment{Abstract postcondition of $d$ under
                $\tau$ in weakenings of $s$
                    (Sec.~\ref{sec:inductive_postcondition}).}
            \label{alg:formula_slicing:weaken}
            \State $e \gets$ \Call{Weaken}{$d$, $\tau \land n_b$, $s$}
        \Else
            \label{alg:formula_slicing:to_rcnf}
            \LineComment{Convert the current invariant candidate to \rcnf.}
            \State $e \gets$ \Call{ToRCNF}{$\semantics{d} \land \tau$}
        \EndIf
        \Return $(n_b, e, t)$
    \EndFunction

    \Function{FindSibling}{state $b$, \cfa node $n$}
        \If{$\pi_1(b) = n$}
            \Return $b$
        \ElsIf{$\pi_3(b) = \emptyset$}
            \Return $\emptyset$
        \Else
            \Return \Call{FindSibling}{$\pi_3(b), n$}
        \EndIf
    \EndFunction
\end{algorithmic}
\caption{Formula Slicing: Postcondition Computation.}
\label{alg:formula_slicing}
\end{algorithm}

\subsection{Example Formula Slicing Run}
Consider running formula slicing on the program in Fig.~\ref{ex:nested_loops},
which contains two nested loops.
The corresponding edge encoding is given in Eq.~\ref{eq:edge_encoding}:

\begin{equation}\label{eq:edge_encoding}
\begin{split}
    \tau_1 \equiv& x' = 0 \land y' = 0 \land
        (p' = 1 \land s' \lor p' = 2 \land \lnot s') \\
    \tau_2 \equiv& x' = x + 1 \land c' = 100 \\
    \tau_3 \equiv& (\lnot (p \neq 1 \land p \neq 2) \lor
         (p \neq 1 \land p \neq 2 \land c' = 0)) \\
        & \land y' = y + 1 \land p' = p \\
    \tau_4 \equiv& x' = x \land y' = y \land p' = p \land c'=c
\end{split}
\end{equation}

\begin{figure}[t]
    \tikzpicturedependsonfile{diagrams/nested_cfa.tex}
    \begin{subfigure}[t]{0.5\textwidth}
        \centering
        \begin{minted}{c}
int p, c, s=nondet(), x = 0, y = 0;
p = s ? 1 : 2;
while (nondet()) { // A(X)
    x++;
    c = 100;
    while (nondet()) { // B(X)
        if (p != 1 && p != 2) {
            c = 0;
        }
        y++;
    }
    assert(c == 100);
}
assert((s && p == 1) || (!s && p == 2));
        \end{minted}
        \vspace{-1mm}
    \end{subfigure} %
    \hspace{3mm} %
    \begin{subfigure}[t]{0.5\textwidth}
        \centering
        \vspace{-1mm}
        \begin{tikzpicture}[auto, scale=1, transform shape]
    \node (i) [program_node] {$I$};
    \node (a) [program_node, below=1 of i] {A};
    \node (out) [left=1 of a] {};
    \node (b) [program_node, below=1 of a] {B};

    \path [transition] (i) edge node {$\tau_1$} (a)
        (a) edge node  {$\tau_e$} (out)
        (a) edge node {$\tau_2$} (b)
        (b) edge [loop left] node {$\tau_3$} (b);
    \draw[transition] (b) to [bend right=90] node[anchor=west] {$\tau_4$} (a);
\end{tikzpicture}
    \end{subfigure}
    \caption{Example Program with Nested Loops: Listing and \cfa.}\label{ex:nested_loops}
\end{figure}

Similarly to Eq.~\ref{eq:inductiveness_checking}, we can check candidate
invariants $A(X), B(X)$ for inductiveness by posing an SMT query shown in
Eq.~\ref{eq:multi_loop_inductiveness_checking}.
The constraint in Eq.~\ref{eq:multi_loop_inductiveness_checking} is
unsatisfiable iff $\{ A: A(X), B: B(X) \}$ is an inductive invariant
(Sec.~\ref{sec:inductive_invariant}).

\begin{equation}\label{eq:multi_loop_inductiveness_checking}
    \exists X \cup X' \bigvee \begin{aligned}
    \tau_1(X') &\land \lnot A(X') \\
    A(X) \land \tau_2(X \cup X') &\land \lnot B(X') \\
    B(X) \land \tau_3(X \cup X') &\land \lnot B(X') \\
    B(X) \land \tau_4(X \cup X') &\land \lnot A(X')
\end{aligned}
\end{equation}

Eq.~\ref{eq:multi_loop_inductiveness_checking} is unsatisfiable iff \emph{all}
of the disjunction arguments are unsatisfiable, and hence the checking can be
split into multiple steps, one per analyzed edge.
Each postcondition computation (Alg.~\ref{alg:formula_slicing})
either generates an initial seed invariant candidate,
or picks one argument of Eq.~\ref{eq:multi_loop_inductiveness_checking},
and weakens the right hand side until the constraint becomes unsatisfiable.
Run of the formula slicing algorithm on the example is given below:

\begin{itemize}
    \item Traversing $\tau_1$, we get the initial candidate invariant \\
        $I(A) \gets \bigwedge \setof{x = 0, y = 0, p = 1 \lor p = 2, s \implies p = 1}$.
    \item Traversing $\tau_2$, the candidate invariant for $B$ becomes \\
        $I(B) \gets \bigwedge \setof{x = 1, y = 0, p = 1 \lor p = 2, s \implies p = 1,
        c=100 }$.
    \item After traversing $\tau_3$, we weaken the candidate invariant $I(B)$ by
        dropping the lemma $y=0$ which gives rise to the
        counterexample to induction ($y$ gets incremented).
        The result is $\bigwedge \setof{x = 1, p = 1 \lor p = 2, s \implies p = 1,
        c=100 }$, which is inductive under $\tau_3$.
    \item The edge $\tau_4$ is an identity, and the postcondition computation
        results in lemmas $x=0$ and $y=0$ dropped from $I(A)$, resulting in
        $\bigwedge \{y = 0, p = 1 \lor p = 2, s \implies p = 1\}$.
    \item After traversing $\tau_2$, we obtain the weakening of $I(A)$ by
        dropping the  lemma $x=1$ from $I(B)$, resulting in 
        $\bigwedge \setof{p = 1 \lor p = 2, s \implies p = 1, c=100 }$.
    \item Finally, the iteration converges, as all further postconditions are
        already covered by existing invariant candidates.
        Observe that the computed invariant is sufficient for proving the
        asserted property.
\end{itemize}

\section{Implementation}
\label{sec:implementation}

We have developed the \textsc{Slicer} tool, which runs the formula slicing algorithm on an input C program.
\textsc{Slicer} performs inductive weakenings using the \textsc{Z3}~\cite{z3} SMT solver,
and best-effort quantifier elimination using the \texttt{qe-light} Z3 tactic.
The source code is integrated inside the open-source verification framework
CPAchecker~\cite{cpachecker},
and the usage details are available at~\url{http://slicer.metaworld.me}.
Our tool can analyze a verification task (Sec.~\ref{sec:background_semantics})
by finding an inductive invariant and reporting \texttt{true}
if the found invariant \emph{separates} the initial state from the error property,
and \texttt{unknown} otherwise.

We have implemented the following optimizations:
\begin{description}
    \item[Live Variables] We precompute live variables,
        and the candidate lemmas generated during \rcnf conversion
        (Alg.~\ref{alg:formula_slicing},
            line~\ref{alg:formula_slicing:to_rcnf})
        which do not contain live variables are discarded.
    \item[Non-Nested Loops] When performing the inductive weakening
        (Alg.~\ref{alg:formula_slicing}, line~\ref{alg:formula_slicing:weaken})
        on the edge $\tuple{N, \tau, N}$ we annotate and weaken the candidate
        invariants on both sides (without modifications described in
        Sec.~\ref{sec:inductive_postcondition}), and we cache the fact that the
        resulting weakening is inductive under $\tau$.
    \item[\textsc{CFA} Reduction] We pre-process the input \cfa and we remove all nodes
        from which there exists no path to an error state.
\end{description}

\subsection{Syntactic Weakening Algorithm}
\label{sec:syntactic_weakening}

A syntactic-based approach is possible as a faster and less
precise alternative which does not require SMT queries.
For an input formula $\phi(X)$ in \textsc{RCNF},
and a transition $\tau(X \cup X')$,
syntactic weakening returns a subset of lemmas in $\phi$,
which are not \emph{syntactically modified} by $\tau$:
that is, none of the variables are modified or have their address taken.
For example, the lemma $x > 0$ is not syntactically modified by the
transition $y' = y + 1 \land x \geq 1$, but it is modified by $x' = x + 1$.

\section{Experiments and Evaluation}
\label{sec:evaluation}

We have evaluated the formula slicing algorithm on the ``Device Drivers'' category
from the International Competition on Software Verification
(\textsc{SV-COMP})~\cite{svcomp16}.
The dataset consists of $2120$ verification tasks,
of which $1857$ are designated as \emph{correct}
(the error property is unreachable), and the rest admit a counter-example.
All the experiments were performed on
Intel Xeon E5-2650 at 2.00 GHz,
and limits of 8GB RAM, 2 cores, and 600 seconds CPU time per program.
We compare the following three approaches:

\begin{description}
    \item[Slicer-CEX] (rev \texttt{21098}) Formula slicing algorithm running
        counterexample-based weakening (Sec.~\ref{sec:cti_weakening}).
    \item[Slicer-Syntactic] Same, with syntactic weakening
        (Sec.~\ref{sec:syntactic_weakening}).
    \item[Predicate Analysis] (rev \texttt{21098}) Predicate abstraction with
        interpolants~\cite{lazy_abstraction},
        as implemented inside CPAchecker~\cite{predicateCPA}.
        We have chosen this approach for comparison as it represents
        state-of-the-art in model checking, and was found especially suitable for
        analyzing device drivers.
    \item[PAGAI]~\cite{pagai} (git hash \texttt{e44910})
        Abstract interpretation-based tool,
        which implements the path focusing~\cite{path_focusing} approach.
\end{description}
Unabridged experimental results are
available at~\url{http://slicer.metaworld.me}.

In Tab.~\ref{tab:results_table} we show overall precision and performance
of the four compared approaches.
As formula slicing is over-approximating,
it is not capable of finding counterexamples,
and we only compare the number of produced safety proofs.

From the data in the table we can see that predicate analysis
produces the most correct proofs.
This is expected since it can generate new predicates,
and it is \emph{driven} by the target property.
However, formula slicing and abstract interpretation have much less timeouts,
and they do not require target property annotation, making them more suitable for
use in domains where a single error property is not available
(advanced compiler optimizations, multi-property verification, and boosting
another analysis by providing an inductive invariant).
The programs verified by different approaches are also different,
and formula slicing verifies $22$ programs predicate analysis could not.

The performance of the four analyzed approaches is shown in the quantile
plot in Fig.~\ref{fig:quantile_plot}.
The plot shows that predicate analysis is considerably more time consuming
than other analyzed approaches.
Initially, \textsc{PAGAI} is much faster than other tools,
but around $15$ seconds it gets overtaken by both slicing approaches.
Though the graph seems to indicate that \textsc{PAGAI} overtakes slicing again
around $100$ seconds, in fact the bend is due to out of memory errors.

The quantile plot also shows that the time taken to perform inductive
weakening does not dominate the overall analysis time for formula slicing.
This can be seen from the small timing difference between
the syntactic and counterexample-based approaches,
as the syntactic approach does not require querying the SMT solver in order to
produce a weakening.

Finally, we present data on the number of SMT calls required for computing
inductive weakenings in Fig.~\ref{fig:cex_iterations}.
The distribution shows that the overwhelming majority of weakenings can be
found within just a few SMT queries.

\begin{table}[t]
    \centering
    \tikzpicturedependsonfile{results/results_table_compiled.tex}
\begin{tabular}{@{}lrrrr@{}}
    \toprule
    Tool
        & \# proofs
        & \# incorrect
        & \# timeouts
        & \# memory outs
        \\
    \midrule
    
        Slicer-CEX
            & 1253
            & 0
            & 475
            & 0
            \\
    
        Slicer-Syntactic
            & 1166
            & 0
            & 407
            & 0
            \\
    
        Predicate Analysis
            & 1301
            & 0
            & 657
            & 0
            \\
    
        PAGAI
            & 1214
            & 3
            & 409
            & 240
            \\
    
    \bottomrule
\end{tabular}

    \vspace{2mm}
    \caption{Evaluation results.
        The ``\# incorrect'' column shows the number of safety proofs the tool
        has produced where the analyzed program admitted a counterexample.}
    \label{tab:results_table}
\end{table}

\begin{figure}[t]
    \tikzpicturedependsonfile{results/quantile_plot.tex}
    \tikzpicturedependsonfile{results/quantile_plot.dat}
    \tikzpicturedependsonfile{results/cex_iterations_histogram.tex}
    \tikzpicturedependsonfile{results/histogram.dat}

    \hspace{-3mm}%
    \begin{subfigure}[t]{0.5\textwidth}
        \hspace{-5mm}\begin{tikzpicture}[scale=0.8]
    \pgfplotstableread{results/quantile_plot.dat}\comparisontable
    \begin{axis}[xmode=normal, axis equal=false,  ymode=log,
        xlabel={Programs},
        ylabel={Wall Time (s)},
        enlarge x limits = 0.05,
        enlarge y limits = 0.07,
        legend entries={
            Slicer-CEX,
            Slicer-Syntactic,
            Predicate Analysis,
            PAGAI
        },
        legend style={
            draw=none,
            fill=none,
            cells={anchor=west},
            legend pos={north west},
            font={\footnotesize},
            at={(0.01, 0.99)},
        }
        ]
        \addplot+[mark repeat=20] table[x index=0, y index=1] {\comparisontable};
        \addplot+[mark repeat=20, purple, mark phase=5] table[x index=0, y index=2] {\comparisontable};
        \addplot+[mark repeat=20, teal, mark phase=0, mark=+] table[x index=0, y index=3] {\comparisontable};
        \addplot+[mark repeat=20, mark phase=5] table[x index=0, y index=4] {\comparisontable};
    \end{axis}
\end{tikzpicture}
        \vspace{-1mm}
        \caption{Quantile plot showing performance of the compared approaches.
            Shows analysis time for each benchmark,
            where the data series are sorted by time separately for each tool.
            For readability, the dot is drawn for every $20^{\text{th}}$ program,
            and the time is rounded up to one second.
        }
        \label{fig:quantile_plot}
    \end{subfigure} %
    \hspace{1mm} %
    \begin{subfigure}[t]{0.5\textwidth}
        \begin{tikzpicture}[scale=0.8]
    \pgfplotstableread{results/histogram.dat}\blah
    \begin{axis} [
        axis equal=false,
        ymode=log,
        ybar interval=1,
        restrict x to domain=0:70,
        enlargelimits=false,
        xlabel=\# SMT queries,
        ylabel=Required for \# weakenings,
        xtick=data,
        x tick label style={font=\footnotesize, anchor=north east},
        ]
    \addplot table {\blah};

    \end{axis}
\end{tikzpicture}
        \vspace{-1mm}
        \caption{Distribution of the number of iterations of inductive weakening
            (Sec.~\ref{sec:cti_weakening}) required for convergence across all
            benchmarks.
            Horizontal axis represents the number of SMT calls required for
            convergence of each weakening, and vertical axis represents the count of
            the number of such weakenings.
        }
        \label{fig:cex_iterations}
    \end{subfigure}
\end{figure}

\section{Conclusion and Future Work}
\label{sec:conclusion}

We have proposed a ``formula slicing'' algorithm for efficiently finding
potentially disjunctive inductive invariants in programs,
which performs abstract interpretation in the space of weakenings over the
formulas representing the ``initial'' state.
We have demonstrated that it could verify many programs other approaches
could not, and that the algorithm can be run on real programs.

The motivation for our approach is addressing the limitation
of abstract interpretation which forces it to perform abstraction after each
analysis step,
which often results in a very rough over-approximation.
Thus we believe our method is well-suited for augmenting numeric abstract
interpretation.

As with any new inductive invariant generation technique, a possible future work
is investigating whether formula slicing can be used for increasing the performance and
precision of other program analysis techniques, such as $k$-induction, predicate
abstraction or property-directed reachability.
An obvious approach would be feeding the invariants generated by formula slicing
to a convex analysis running abstract interpretation or policy
iteration~\cite{LPI}.

Furthermore, the inductive weakening approach could also be used for the generalization of
the $k$-induction algorithm over multiple properties.
If we check a set of properties $P$ for inductiveness
under the loop transition $\tau$,
and $\bigwedge P$ is not inductive,
the weakening can find the largest inductive subset.

\paragraph{Acknowledgements} The authors wish to thank Grigory Fedyukovich and
Alexey Bakhirkin for proof-reading and providing valuable feedback,
and the anonymous reviewers for their helpful suggestions.

\bibliographystyle{IEEEtran}
{
    \bibliography{library}
}

\clearpage

\appendix
\section{Complexity of Finding a Non-Trivial Inductive Weakening Over Literals}
\label{sec:complexity_proof}

As we have mentioned in Sec.~\ref{sec:weakening_space}, a more expressive space
of weakenings over formulas is to consider replacing any subset of literals with
$\top$ after a \keyword{NNF} conversion.
In this appendix we show that it leads to a number of undesirable properties,
including the absence of \emph{strongest} inductive weakening
(Ex.~\ref{ex:no_strongest}), and $\Sigma^p_2$
complexity for finding any non-trivial inductive weakening
(Thm.~\ref{lemma:weakening_in_Sigma_p_2}).

\begin{example}[No Strongest Inductive Weakening]
    Consider a program over four Boolean variables $a,b,c,d$
    and the transition relation
    $\tau \equiv a \land b \land c \land d \land \neg a' \land b' \land \neg c' \land d'$ 
    (the only possible transition is from $a \land b \land c \land d$ to $\lnot
    a \land b \land \lnot c \land d$).
    Consider finding the weakening of $\phi \equiv (a \land b) \lor (c \land d)$,
    Both the $\{a\}$-weakening ($b \lor (c \land d)$)
    and the $\{c\}$-weakening ($(a \land b) \lor d$) are inductive, but 
    their intersection $(a \land b) \lor (b \land d) \lor (c \land d)$
        (obviously inductive) is not a weakening of $\phi$ and
    there is no inductive weakening stronger than either of these.
    \label{ex:no_strongest}
\end{example}

\begin{theorem}[$\Sigma^p_2$-completeness]
    The problem of deciding, given quantifier-free SMT formulas
    $\phi(X)$ and $\tau(X \cup X')$,
    whether there exists a non-trivial ($\not\equiv \top$) weakening of $\phi$ that is inductive
    with respect to $\tau$ is $\Sigma^p_2$-complete.
    \label{lemma:weakening_in_Sigma_p_2}
\end{theorem}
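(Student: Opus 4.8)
The plan is to prove the two directions separately: membership in $\Sigma^p_2$ by exhibiting the natural guess-then-verify structure, and $\Sigma^p_2$-hardness by a reduction from $\exists\forall$-SAT (deciding validity of $\exists Y\,\forall Z\; F(Y,Z)$ for a quantifier-free Boolean matrix $F$), which is the canonical $\Sigma^p_2$-complete problem.

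\emph{Membership.} A weakening is determined by the subset $D$ of NNF-literals of $\phi$ that are replaced by $\top$, and $\card{D}$ is at most the size of $\phi$; the weakened formula $\phi_D$ is computable in polynomial time from $D$. I would guess $D$ nondeterministically. Given $D$, two conditions must be checked: that $\phi_D$ is inductive, i.e. $\phi_D \land \tau \land \lnot\phi_D'$ is unsatisfiable (Eq.~\ref{eq:inductiveness_checking}), a co-NP / $\Pi^p_1$ condition; and that $\phi_D \not\equiv \top$, i.e. $\lnot\phi_D$ is satisfiable, an NP / $\Sigma^p_1$ condition. Writing the NP condition as $\exists w$ and the co-NP condition as $\forall v$, the whole problem reads $\exists D\,\exists w\,\forall v\;\psi$ with $\psi$ polynomial-time checkable, which is exactly $\Sigma^p_2$.

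\emph{Hardness.} For the reduction I would use the weakening choice to encode the existential assignment $Y$ and the inductiveness (a universal condition) to encode $\forall Z\,F$. Concretely, I introduce a phase bit $f$ and a transition $\tau$ that fires only from $f=0$ states, leaves $Y$ unchanged, sets $f'=1$, and leaves the universal variables $Z'$ completely free. The intended inductive invariant for a guess $\hat Y$ is
\[
  \iota_{\hat Y} \;\equiv\; \big(f \lor (Y = \hat Y)\big) \;\land\; \big(\lnot f \lor F(Y,Z)\big),
\]
for which consecution under $\tau$ unfolds, using that the target has $f'=1$, $Y'=Y$, and $Z'$ arbitrary, to exactly $\forall Z\; F(\hat Y,Z)$: the $f=0$ source forces $Y = \hat Y$, while the $f'=1$ target forces $F(\hat Y,Z')$ for every $Z'$. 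The task is then to arrange the input formula $\phi$ so that its non-trivial inductive weakenings are precisely the $\iota_{\hat Y}$ (or formulas equivalent to them for inductiveness purposes), so that such a weakening exists iff some guess $\hat Y$ passes the universal check, iff $\exists Y\,\forall Z\;F$.

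\emph{Main obstacle.} The difficulty, and the reason this space is pathological, is that weakening is \emph{monotone}: replacing literals by $\top$ can only enlarge the model set. Hence one cannot simply ``select'' a single assignment $\hat Y$ by weakening a formula denoting a small set, and one must start from a \emph{satisfiable} $\phi$ (otherwise the unsatisfiable weakening $\mathit{false}$ would be a spurious non-trivial inductive weakening, making every instance positive). I would realize the genuine, incomparable choices among assignments using disjunctive selection gadgets, exploiting exactly the phenomenon of Ex.~\ref{ex:no_strongest}, where dropping different literals of a disjunction of conjunctions yields logically incomparable weakenings. The delicate part is proving \emph{tightness}: I must show that no ``cheating'' weakening --- one that, say, weakens the $F(Y,Z)$ part or the phase literals --- can be non-trivial and inductive unless the universal check genuinely holds. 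Because weakening only \emph{helps} consecution (a weaker consequent is easier to satisfy), the universal check cannot be enforced through inductiveness alone; it must be enforced through non-triviality, by engineering the verification literals to be \emph{essential}, i.e. so that dropping them collapses $\phi$ to $\top$. Establishing this no-spurious-weakening invariant, simultaneously with keeping the assignment selection faithful and $\phi$ satisfiable, is the core of the proof and the step I expect to require the most careful gadget design.
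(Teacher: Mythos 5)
Your membership argument is fine and matches the paper's, with the welcome extra care of noting that non-triviality ($\lnot\phi_D$ satisfiable) is an NP check that composes with the co-NP inductiveness check into a single $\exists\,\forall$ predicate. The gap is in the hardness direction: you sketch a reduction but explicitly leave its core unproved, and the gadget as written fails. In your construction the conjunct $\lnot f \lor F(Y,Z)$ is active only in the consequent of the consecution implication (the source side has $f=0$), so replacing every literal of $F$ by $\top$ collapses it to $\top$ while leaving $f \lor (Y=\hat{Y})$ intact; since $\tau$ forces $f'=1$, that residue is inductive no matter what $F$ is, and it is not equivalent to $\top$, so the reduction answers ``yes'' on every instance. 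You correctly diagnose this danger (``the universal check cannot be enforced through inductiveness alone'') and propose to block it via non-triviality through ``careful gadget design,'' but that design \emph{is} the hardness proof, and it is not supplied. There is also an unaddressed encoding issue: the exponentially many mutually incomparable candidates $\iota_{\hat{Y}}$ must all arise as weakenings of one fixed $\phi$, which forces a ``keep one literal of $y_j \land \lnot y_j$ per bit'' structure whose unweakened base is unsatisfiable --- in tension with your own (correct) observation that an unsatisfiable weakening is a spurious non-trivial inductive witness.

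The paper escapes both problems by moving the universal check out of the candidate invariant and into the transition relation. It builds a counter program over the existentially quantified bits plus an overflow bit: each transition nondeterministically samples the universal bits and may increment the counter only when the sampled valuation falsifies $G$, so a counter value is a sink exactly when $G$ holds there for all choices of the universal bits; if no sink exists the counter overflows and the entire state space becomes reachable, leaving $\top$ as the only inductive invariant. A non-trivial inductive invariant therefore exists iff $\psi$ is true, and the weakening lattice of $\phi \equiv \bigvee_i (x_i \land \lnot x_i)$ is just rich enough to express ``the counter differs from a given value,'' one weakening per choice of one literal per disjunct. Because $\phi$ carries none of the $\forall$-check, there is nothing in it to cheat on. If you want to rescue your phase-bit idea, the lesson is the same: encode the universal condition as a blocking/reachability property of $\tau$ rather than as a conjunct of the invariant. (In your favour: the degenerate weakening you worry about --- the empty weakening of a $\phi$ that is itself $\equiv \bot$ --- is a loose end in the paper's own reduction as literally stated, and must be excluded, e.g.\ by additionally requiring the weakening to contain the initial states.)
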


\begin{proof}[Belonging to $\Sigma^p_2$]
    Let $S$ be some subset of literals of $\phi$.
    Let $\hat{\phi}$ be the weakening of $\phi$ where all literals in $S$ are
    replaced with $\top$.
    Checking that $\hat{\phi}$ is inductive with respect to $\tau$
    is in co-NP, therefore the problem of finding a non-trivial $\hat{\phi}$ is
    in~$\Sigma^p_2$
\end{proof}

We show completeness by constructing from an arbitrary
closed $\exists^* \forall^*$ formula $\psi$
a loop $\tau$ and a precondition $I$ such that the existence of a non-trivial
($\not\equiv \top$) weakening of the precondition is equivalent to the truth of
$\psi$.
Without loss of generality, let $\psi$ have $m$ Boolean variables
$x_0,\dots,x_{m-1}$ bound by the existential quantifier and
$n$ Boolean variables $y_0,\dots,y_{n-1}$ bound by the universal one:
\begin{equation}
    \begin{aligned}
        \psi \equiv &\exists x_0,\dots,x_{m-1} . \\
                    & \forall y_0,\dots,y_{n-1} .
                G(x_0,\dots,x_{m-1},y_0,\dots,y_{n-1})
    \end{aligned}
    \label{eq:psi_definition}
\end{equation}
Let us denote the bitvector $(x_0,\dots,x_{m-1})$ as $\vec{x}$ and the bitvector
$(y_0,\dots,y_{n-1})$ as $\vec{y}$.
Let $\mathit{enc}: \mathbb{B}^m \to [0,2^m-1]$ denote the function
for standard integer encoding of the
$\vec{x}$ bitvector, $x_0$ being the lowest-order bit and $x_{m-1}$ the
highest-order one.
Let $\mathit{succ} : \mathbb{B}^m \setminus \{\top^m\} \to \mathbb{B}^m$ be the successor
function such that $\mathit{enc}(\mathit{succ}(\vec{x})) = 1 +
\mathit{enc}(\mathit{succ}(\vec{x}))$, which is only defined for
non-overflowing values.

Now we define the transition system over the set of boolean variables $\vec{x}$
and the overflow bit $o$.
Let the initial state $I(\vec{x},o)$ be $\vec{x} = \bot \land o = \bot$,
and let the transition relation $\tau(\vec{x},\vec{x'},o,o')$ to be:
\begin{equation}
\begin{aligned}
    &\big(
        \lnot (\forall \vec{y} . G(\vec{x}, \vec{y}))
            \land  \\
                &((\vec{x} \neq \top \land \vec{x'} = \mathit{succ}(\vec{x}) \land o' = o)
                \lor
                (\vec{x} = \top \land o' = \top))
    \big) \\
&\bigvee
    \big(
        \vec{x'} = \vec{x} \land o' = o
    \big)
\end{aligned}
\end{equation}
In plain terms, the transition relation may increment $\vec{x}$ as long as 
it is not overflowing and the guard can be falsified for some $\vec{y}$,
and $\vec{x}$ is forced to stay constant on overflow or
when it reaches some $\vec{\hat{x}}$ such that $\forall \vec{y} . G(\vec{\hat{x}},\vec{y})$.
Initialization and transition relation for the transition system, and the
corresponding program are shown in Fig.~\ref{prog:counter}.

\begin{figure}[t]
    \centering
    \begin{minipage}{0.7\linewidth}
\begin{lstlisting}[language=Java,mathescape=true]

bitvector $\vec{x}$ = $\vec{\bot}$;
boolean o = $\bot$;
while(nondet()) {
    // Non-deterministic choice.
    bitvector $\vec{y}$ = nondet();
    if (not $G(\vec{x},\vec{y})$) {
        if ($\vec{x} == \top$) {
            // Set the overflow
            // bit.
            o = $\top$;
            $\vec{x}$ = nondet();
        } else {
            // Increment a given
            // bitvector.
            $\vec{x}$ = succ($\vec{x}$);
        }
    }
}
\end{lstlisting}
    \end{minipage}
    \begin{minipage}{0.2\linewidth}
\begin{tikzpicture}[auto, scale=1, transform shape]
    \node(entry) [] {};
    \node (a) [program_node, below=0.6 of entry] {A};
    \draw [transition] (entry) to node {$I(\vec{x}, o)$} (a);
    \draw [transition, loop below] (a) to node {$\tau(\vec{x}, \vec{x'}, o, o')$} (a);
\end{tikzpicture}
    \end{minipage}
    \caption{Counter Program and Transition System}
    \label{prog:counter}
    \vspace{-2mm}
\end{figure}

\begin{lemma}
    There exists a non-trivial ($\not\equiv \top$) inductive invariant
    for the program in Fig.~\ref{prog:counter} if and only if $\psi$
    (Eq.~\ref{eq:psi_definition}) is satisfiable.
    \label{lemma:inductiveness_solution}
\end{lemma}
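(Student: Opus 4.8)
The plan is to prove the biconditional through a forward reachability analysis, reducing everything to a single structural claim: \emph{the overflow bit $o$ is set to $\top$ along some run of the program if and only if $\psi$ is unsatisfiable}. The leverage comes from the standard fact that the set $R$ of forward-reachable states is the \emph{strongest} inductive invariant and that every inductive invariant contains $R$; hence a non-trivial ($\not\equiv \top$) inductive invariant exists precisely when $R \neq \top$, that is, when at least one state is unreachable. Since $o$ starts at $\bot$ and is flipped to $\top$ only by the overflow branch, the structural claim is exactly what is needed to control reachability of the overflow states.

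To prove the structural claim I would analyse the increment chain. Reaching a state with $\vec{x} = \top$ from the initial $\vec{x} = \bot$ requires incrementing through the consecutive values $\bot, \mathit{succ}(\bot), \ldots, \top$ in $\mathit{enc}$-order, and each increment out of a value $\vec{z}$ is enabled only when the guard $\forall \vec{y}.\, G(\vec{z},\vec{y})$ is falsifiable at $\vec{z}$. Since the overflow branch additionally requires the guard to be falsifiable at $\top$, firing the overflow forces the guard to be falsifiable at \emph{every} $\vec{x}$, which is $\forall \vec{x}.\, \exists \vec{y}.\, \lnot G(\vec{x},\vec{y})$, i.e.\ $\lnot \psi$. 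Conversely, if $\psi$ is unsatisfiable the guard is falsifiable everywhere, so this entire chain is feasible with $o$ staying $\bot$ and the overflow fires at $\vec{x} = \top$; this proves the claim.

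With the claim in hand both directions follow. If $\psi$ is unsatisfiable, the overflow fires and its nondeterministic reassignment of $\vec{x}$ makes every pair $(\vec{x},\top)$ reachable, while the traversal has already reached every $(\vec{x},\bot)$; thus $R = \top$ and the only inductive invariant is $\top$. If instead $\psi$ holds, the overflow branch never fires, every reachable state has $o = \bot$, and all overflow states are unreachable, so $R \neq \top$. In this case $R$ itself --- concretely the region $o = \bot \land \mathit{enc}(\vec{x}) \le \mathit{enc}(\vec{\hat{x}})$ for the least witness $\vec{\hat{x}}$ satisfying $\forall \vec{y}.\, G(\vec{\hat{x}},\vec{y})$ --- is a non-trivial inductive invariant, since it contains the initial state, is closed under $\tau$, and omits the overflow states.

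I expect the crux to be the ``monotone traversal'' step in the structural claim: that any run reaching $\vec{x} = \top$ must have passed through and incremented out of every intermediate bitvector value, so that the enabledness of each increment pins the guard to be falsifiable there. This relies on the observations that $\mathit{succ}$ moves only to the immediate $\mathit{enc}$-successor and that, before the first overflow, $\vec{x}$ is modified by no other edge, so its value evolves monotonically. A secondary subtlety worth checking explicitly is the boundary case $\vec{\hat{x}} = \top$: the chain then does reach $\top$, but the guard holds there, disabling the overflow branch, so no overflow occurs and the conclusion $R \neq \top$ still holds.
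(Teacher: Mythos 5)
Your proposal is correct and takes essentially the same route as the paper: the paper's sufficient-condition proof exhibits exactly your invariant $\lnot o \land \mathit{enc}(\vec{x}) \le \mathit{enc}(\vec{\hat{x}})$ for a witness $\vec{\hat{x}}$ of $\psi$, and its necessary-condition proof rests on the same observation that without a fixpoint of the increment chain the entire state space becomes reachable. Your packaging via ``the reachable set is the strongest inductive invariant, so a non-trivial invariant exists iff some state is unreachable,'' together with the explicit monotone-traversal argument for why firing the overflow forces the guard to be falsifiable at every $\vec{x}$, makes explicit a step the paper leaves terse, but the substance is identical.
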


Observe that $\tau$ can be satisfied for all possible values of $\vec{x}$ by
a suitable choice of $\vec{x'}$.
Let $f(\vec{x})$ be the largest (under $\mathit{enc}$) possible value of
$\vec{x'}$ which satisfies $\tau(\vec{x}, \vec{x'}, o, o')$.

\begin{proof}{Sufficient Condition.}
    Assume $\psi$ is satisfiable for some $\hat{\vec{x}}$.
    Then $\hat{\vec{x}}$ is a fixed point under $f$ (as it satisfies $G$ for all
    possible values of $\vec{y}$).
    Consider the set of values defined by $R \equiv \lnot o \land \mathit{enc}(\vec{x}) \leq
    \hat{\vec{x}} \}$.
    It is inductive, since the largest value in $R$ set maps to itself under
    $f$, and all other values map to the ``next'' (under $\mathit{enc}$) value in $R$.
    It is also non-trivial, since the bit $o$ is defined not to be $\top$.
\end{proof}

\begin{proof}{Necessary Condition.}
    Assume there exists a non-trivial inductive invariant for the program in
    Fig.~\ref{prog:counter}.
    At every transition, $\vec{x}$ either stays constant or is incremented by~$1$.
    Since we have assumed the existence of a non-trivial inductive invariant,
    there exists $\vec{\hat{x}}$ such that it is a fixpoint under $f$ and
    $\mathit{enc}(\vec{\hat{x}}) \leq 2^m - 1$ (otherwise the entire state space
    is reachable, and the only possible inductive invariant is $\top$).
    This is only possible if $\forall \vec{y} .
    G(\vec{\hat{x}}, \vec{y})$ (otherwise $\hat{x}$ may be incremented).
    But this is exactly the condition for $\psi$ being satisfiable.
\end{proof}

\begin{corollary}
    For every non-trivial inductive invariant of the program in Fig.~\ref{prog:counter}
    there exists some $\vec{\hat{x}}$ such that
    $\{\vec{x} \mid \mathit{enc}(\vec{x}) < \mathit{enc}(\vec{\hat{x}}) \}$ is
    inductive.
    Furthermore, the reachable state space is exactly all $\vec{x}$ smaller
    (under $\mathit{enc}$) than $\vec{\hat{x}}$,
    and $\{\vec{x} \mid \vec{x} \neq \vec{\hat{x}} \}$ is inductive
    (as the states larger than $\vec{\hat{x}}$ are not reachable).
    \label{corollary:inductive_invariant_shape}
    \vspace{-2mm}
\end{corollary}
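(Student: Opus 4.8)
The plan is to read the statement off Lemma~\ref{lemma:inductiveness_solution} together with the monotonicity of the counter already isolated in its proof. Assume we are given a non-trivial inductive invariant; by Lemma~\ref{lemma:inductiveness_solution} the formula $\psi$ (Eq.~\ref{eq:psi_definition}) is then satisfiable, so let $\vec{w}$ be the \emph{smallest} bitvector (under $\mathit{enc}$) with $\forall \vec{y} . G(\vec{w}, \vec{y})$, and set $\vec{\hat{x}} \equiv \mathit{succ}(\vec{w})$, the smallest value that is not reachable. I would dispose of the degenerate case $\vec{w} = \top$ first: there $\mathit{succ}$ is undefined, every $\vec{x}$ is reachable with $o = \bot$, and the witnessing non-trivial invariant is simply $\lnot o$, which is inductive because $\forall \vec{y} . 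G(\top, \vec{y})$ disables the overflow disjunct.

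Next I would pin down the reachable state space. Every disjunct of $\tau$ either keeps $\vec{x}$ fixed or replaces it by $\mathit{succ}(\vec{x})$, so along any run with $o = \bot$ the quantity $\mathit{enc}(\vec{x})$ is non-decreasing. Starting from $\vec{x} = \bot$, each value with $\mathit{enc}(\vec{x}) < \mathit{enc}(\vec{w})$ falsifies $\forall \vec{y} . G(\vec{x}, \vec{y})$ by minimality of $\vec{w}$, hence enables the increment disjunct; an easy induction then shows $\bot, \dots, \vec{w}$ are all reachable, while $\vec{w}$ is a fixpoint of $f$ since at $\vec{w}$ the guard $\lnot(\forall \vec{y} . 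G)$ is false and only ``stay constant'' survives. Thus the reachable set is exactly $\{\vec{x} \mid \mathit{enc}(\vec{x}) \leq \mathit{enc}(\vec{w})\} = \{\vec{x} \mid \mathit{enc}(\vec{x}) < \mathit{enc}(\vec{\hat{x}})\}$, and $o$ never flips to $\top$ because overflow demands $\vec{x} = \top$ with a falsifiable guard, which the trap at $\vec{w} < \top$ forbids.

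I would then discharge the first inductiveness claim directly against $\tau$. The set $\{\vec{x} \mid \mathit{enc}(\vec{x}) < \mathit{enc}(\vec{\hat{x}})\}$ is the $R$ of the sufficiency argument without the $\lnot o$ conjunct: it caps $\mathit{enc}(\vec{x})$ at $\mathit{enc}(\vec{w})$, interior values increment to values still $\leq \mathit{enc}(\vec{w})$, $\vec{w}$ maps to itself, and crucially the cap keeps $\vec{x} < \top$ so the overflow disjunct never fires. Hence it is closed under $\tau$ and, containing $\vec{x} = \bot$, is a genuine inductive invariant; by the reachability computation it is moreover exact.

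The step I expect to be the main obstacle is the claim that $\{\vec{x} \mid \vec{x} \neq \vec{\hat{x}}\}$ is inductive. Closure here is \emph{not} immediate, because this set no longer caps $\vec{x}$ from above: it contains the state $\vec{x} = \top$, and if $\lnot(\forall \vec{y} . G(\top, \vec{y}))$ the overflow disjunct of $\tau$ leaves $\vec{x'}$ wholly unconstrained, so it may land on $\vec{\hat{x}}$ and break consecution. The resolution I would pursue is that every such transition emanates only from the (unreachable) region $\vec{x} = \top$ and necessarily sets $o' = \top$; excluding only the single reachable-regime state, i.e.\ using the invariant $o \lor \vec{x} \neq \vec{\hat{x}}$ (equivalently, dropping only $\lnot o \land \vec{x} = \vec{\hat{x}}$), restores closure while preserving non-triviality, since transitions from $\vec{x} \leq \vec{w}$ never reach $\vec{\hat{x}}$ and overflow transitions move into the $o = \top$ half. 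Making this reconciliation between literal consecution under $\tau$ and the reachability-based justification ``the states larger than $\vec{\hat{x}}$ are not reachable'' precise is the crux of the argument.
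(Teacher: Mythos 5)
Your reconstruction of the first two claims follows the route the paper intends---the corollary is stated without a separate proof and is meant to fall out of the Necessary Condition argument for Lemma~\ref{lemma:inductiveness_solution}---and you carry it out more carefully than the paper does: taking $\vec{\hat{x}} = \mathit{succ}(\vec{w})$ for the minimal $\vec{w}$ with $\forall \vec{y} \ldotp G(\vec{w},\vec{y})$ resolves the off-by-one between ``fixpoint of $f$'' and ``first unreachable value'' that the paper leaves implicit, and your monotone-counter argument for the exactness of the reachable set is the intended one. (You are also right that the case where only $\vec{x}=\top$ satisfies the inner $\forall$ needs separate treatment, since then every $\vec{x}$ is reachable and no $\vec{\hat{x}}$ with the stated property exists; the corollary is silent on this.)

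On the third claim you have put your finger on a genuine defect in the corollary as stated, not a gap in your own argument. The parenthetical justification ``the states larger than $\vec{\hat{x}}$ are not reachable'' conflates reachability with consecution: inductiveness quantifies over all states of the candidate set, reachable or not, and the overflow disjunct $\vec{x} = \top \land o' = \top$ leaves $\vec{x'}$ unconstrained, so whenever $\lnot(\forall\vec{y}\ldotp G(\top,\vec{y}))$ holds and $\vec{\hat{x}} \neq \top$, the transition from $(\top,\bot)$ to $(\vec{\hat{x}},\top)$ is permitted by $\tau$ and exits $\{\vec{x} \mid \vec{x}\neq\vec{\hat{x}}\}$, which is therefore not inductive. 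Your repair $o \lor \vec{x} \neq \vec{\hat{x}}$ does restore consecution while preserving non-triviality, but be aware that it is not a weakening of the formula $\phi \equiv \bigvee (x_i \land \lnot x_i)$ used immediately after the corollary, since that $\phi$ contains no literal over $o$; to keep the downstream $\Sigma^p_2$-hardness reduction intact one must either extend $\phi$ with an additional disjunct $o \land \lnot o$ (whose weakening to $o$ yields exactly your invariant) or modify the transition relation so that the overflow branch keeps $\vec{x'} = \vec{x}$. Either patch is routine, but the corollary---and the reduction built on it---does need one of them, and your proposal is the only place this is made explicit.
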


Now consider finding inductive (with respect to $\tau$ Fig.~\ref{prog:counter})
weakenings of the following formula $\phi$:
\begin{equation}
    \phi \equiv \bigvee (x_i \land \lnot x_i)
\end{equation}
Each $x_i$ represents $i$'th bit of $\vec{x}$.
Observe that for any $\vec{\hat{x}} \in [0,2^m-1]$, we can weaken $\phi$ to be
equivalent to $\vec{x} \neq \vec{\hat{x}}$,
by making a suitable weakening choice for every $i$'th bit of $\vec{\hat{x}}$
(if the $i$-th bit in $\vec{\hat{x}}$ is $\bot$ we replace $\lnot
x_i$ by $\top$, if it is $\top$ we replace $x_i$ by $\top$).

From Corollary~\ref{corollary:inductive_invariant_shape} we know that for
every non-trivial inductive invariant there exists $\vec{\hat{x}}$, s.t. the set
of all $\vec{x}$ not equal to $\vec{\hat{x}}$ is inductive.
Thus if a non-trivial inductive invariant exists,
there exists a non-trivial inductive weakening of $\phi$.
In Lemma~\ref{lemma:inductiveness_solution} we have shown that
deciding the existence of a non-trivial inductive invariant is
as hard as deciding the satisfiability of an arbitrary $\exists^* \forall^*$ formula
$\psi$, thus deciding an existence of a non-trivial inductive weakening is as hard as well.

\begin{proof}[$\Sigma^p_2$-completeness]
    Membership in $\Sigma^p_2$ is proved in Lemma~\ref{lemma:weakening_in_Sigma_p_2}.
    Reduction from the $\Sigma^p_2$-complete problem is done from deciding the truth of
    $\exists^* \forall^*$ propositional
    formulas~\cite[Th.~4.1]{Stockmeyer_1976}.
    Transforming $G$ into $\tau$ can be done within a logarithmic working space.
\end{proof}

\paragraph*{Relationship to Template Abstraction Complexity}
Lahiri and Qadeer \cite{Lahiri_Qadeer_2009}
consider the problem of \emph{template abstraction}:
given a precondition, a postcondition, a transition relation and a formula
$\phi(C,X)$, $C$ and $X$ being sets of Boolean variables,
check whether an appropriate choice of $C$ makes $\phi$ an inductive invariant.
They show this problem to be $\Sigma^p_2$-complete as well.
Our class of problems is a strict subset of theirs
(our weakening problems can be immediately translated into template abstraction problems,
but not all template abstraction problems correspond to weakenings),
but we still show completeness.

\end{document}